\newcommand{\bA}{\mathbf{A}}
\newcommand{\bB}{\mathbf{B}}
\newcommand{\bC}{\mathbf{C}}
\newcommand{\bH}{\mathbf{H}}
\newcommand{\bI}{\mathbf{I}}
\newcommand{\bF}{\mathbf{F}}
\newcommand{\bG}{\mathbf{G}}
\newcommand{\Exp}{\mathsf{E}}
\newcommand{\f}{\mathsf{f}}
\newcommand{\g}{\mathsf{g}}
\newcommand{\h}{\mathsf{h}}
\newcommand{\bDe}{\mathbf{\Delta}}
\newcommand{\hS}{\hat{S}}
\newtheorem{theorem}{Theorem}
\newtheorem{lemma}{Lemma}
\newtheorem*{corollary}{Corollary}
\title{\LARGE \bf Adaptive Blind Separation of Two Dependent Sources$^*$
}
\author{George V. Moustakides$^{1}$, Feeby Salib$^{2}$ and Kalliopi Basioti$^{3}$
\thanks{*This work was supported by the National Science Foundation under Grant CIF\,1513373, through Rutgers University.}
\thanks{$^{1}$G.V. Moustakides is faculty with ECE, University of Patras, Rion, Greece, {\tt\footnotesize moustaki@upatras.gr} and with CS, Rutgers University, New Brunswick, NJ, USA, {\tt\footnotesize gm463@rutgers.edu}.}%
\thanks{$^{2}$F. Salib is graduate student with ECE, Rutgers University, New Brunswick, NJ, USA, {\tt\footnotesize fms50@scarletmail.rutgers.edu}.}%
\thanks{$^{3}$K. Basioti is graduate student with CS, Rutgers University, New Brunswick, NJ, USA, {\tt\footnotesize kib21@scarletmail.rutgers.edu}.}%
}%
\begin{document}

\maketitle
\thispagestyle{empty}
\pagestyle{empty}

\begin{abstract}
We consider the problem of adaptive blind separation of two sources from their instantaneous mixtures. We focus on the case where the two sources are not necessarily independent. By analyzing a general form of adaptive algorithms we show that separation is possible not only for independent sources but also for sources that are dependent provided their joint pdf satisfies certain symmetry conditions. A very interesting problem consists in identifying the class of dependent sources that are non-separable, namely, the counterpart of Gaussian sources of the independent case. We corroborate our theoretical analysis with a number of simulations and give examples of dependent sources that can be easily separated.
\end{abstract}

\section{Introduction and Background}
Blind source separation (BSS) is the problem of recovering
unobserved signals (sources) from their observed mixtures. BSS finds applications
in a number of areas as biomedical signal processing, speech
and image processing, data mining and communications \cite{c3}.

The simplest and most common version of the BSS problem
consists in estimating two source signals
$s_{1}(t),s_{2}(t)$ from two observations
$x_{1}(t),x_{2}(t)$ that are {\it
instantaneous} linear mixtures of the sources
of the form $x_{i}(t)=a_{i1}s_{1}(t)+a_{i2}s_{2}(t),~i=1,2$.
Using matrix notation, we can write
\begin{equation}
X_t=\bA S_t,
\end{equation}
where $X_t=[x_{1}(t),x_{2}(t)]^\intercal$ is the observation
vector, $S_t=[s_{1}(t),s_{2}(t)]^\intercal$ the source
signal vector and $\bA$ a constant matrix comprised of the
mixing coefficient $a_{ij},~i,j=1,2$. We assume that the observation
sequence $\{X_t\}$ becomes available sequentially and we are interested in the \textit{on-line} estimation of the source sequence $\{S_t\}$. It is clear that, to
solve this problem, it is sufficient to estimate the matrix
$\bB=\bA^{-1}$ since then $S_t$ can be recovered
as $\bB X_t$. Our results can be extended to cover multiple sources but we reserve the corresponding analysis for the more extended, journal version of our work.

For the solution of the BSS problem we concentrate on
adaptive algorithms. Therefore we will assume
that every time a new sample of the vector process $\{X_t\}$ becomes available we update an estimate $\bB_t$ of the inverse $\bA^{-1}$. We focus on adaptive algorithms of the form
\begin{align}
\begin{split}
\hS_t&=\bB_{t-1}X_t\\
\bB_t&=\bB_{t-1}-\mu\bH\big(\hS_t\big)\bB_{t-1},~\bB(0)=\bI,
\end{split}
\label{eq:B}
\end{align}
where (see \cite{c3,c4,c13}) the most common form of the matrix function $\bH(Z)$ is
\begin{equation}
\bH(Z)=[ZZ^\intercal-\bI]+[ZG^\intercal(Z)-G(Z)Z^\intercal],
\label{eq:ex_H}
\end{equation}
with $Z=[z_1\,z_2]^\intercal$, $G(Z)=[g_1(z_1)\,g_2(z_2)]^\intercal$,
$g_i(z)$ univariate functions,
$\bI$ the identity matrix, and $\mu>0$ is
a scalar {\it step size} that controls the convergence behavior of the algorithm.
Vector $\hS_t$ plays the role of the estimate of the source vector.

Although the algorithm defined by \eqref{eq:B}
is the one we apply in practice, for its analysis it is more convenient to adopt the following normalized version
\begin{align}
\begin{split}
\hS_t&=\bC_{t-1}S_t\\
\bC_t&=\bC_{t-1}-\mu\bH(\hS_t)\bC_{t-1},~\bC(0)=\bA,
\end{split}
\label{eq:mapa}
\end{align}
with $\bC_t=\bB_t\bA$ and where matrix
$\bA$ appears now only as initial condition.
Substituting the first equation in \eqref{eq:mapa} into the second yields the final recursion
\begin{equation}
\bC_t=\bC_{t-1}-\mu\bH(\bC_{t-1}S_t)\bC_{t-1},~\bC(0)=\bA,
\label{eq:calgor}
\end{equation}
which will be used in our subsequent analysis.

We will say that the adaptive algorithm solves the BSS
problem if $\bC_t$ tends {\it in the mean} to a {\it
non-mixing matrix} $\bC$ with the following possible forms
\begin{equation}
\bC=
\begin{bmatrix}
\pm c_1&0\\0&\pm c_2
\end{bmatrix},~\text{or}~
\bC=
\begin{bmatrix}
0&\pm c_1\\ \pm c_2&0
\end{bmatrix},
\label{eq:non-mix}
\end{equation}
where $c_1,c_2$ \textit{positive}, nonzero quantities. In other words $\bC$ must be either
diagonal or anti-diagonal with nonzero elements. These limits
impose an ambiguity in the ordering, power and sign of the estimated sources. Fortunately, in most applications these uncertainties can either be tolerated or corrected with simple means as, for example, employment of pilot signals, where periodically and at known time instances the source signals are synthetic and known before hand.

Remarkably, the algorithm defined in
\eqref{eq:calgor}, can converge even when very limited information about the statistical description of the sources is available. In fact, \cite{c4,c5} it is sufficient
that the functions $g_i(z)$ and the probability density
functions (pdf) of the sources satisfy certain symmetry
properties. We have the following theorem that summarizes
the existing results (for the two-source case).

\begin{theorem}\label{th:1}
Let the sources $\{s_{1}(t),s_{2}(t)\}$ satisfy the following assumptions:\\
A1. For every $t$, $s_{1}(t),s_{2}(t)$ are {\it independent}
random variables with {\it symmetric} densities and {\it at
most one} source can be Gaussian.

\noindent A2. For $\kappa_i=\Exp[g_i'(s_i)]\Exp[s_i^2]-\Exp[s_ig_i(s_i)],i=1,2,$  we have
$$
1+\kappa_1>0,~1+\kappa_2>0,~~(1+\kappa_1)(1+\kappa_2)>1.
$$

\noindent Then the adaptive scheme defined by (\ref{eq:calgor}) with $\bH(Z)$ defined in \eqref{eq:ex_H} can
converge in the mean to a non-mixing matrix and the
corresponding limit is locally stable.
\end{theorem}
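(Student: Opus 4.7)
The plan is to invoke the standard ODE method of stochastic approximation: for small step size $\mu$ the recursion \eqref{eq:calgor} shadows, in the mean, the deterministic flow
\[
\dot{\bC}=-\Exp\big[\bH(\bC S)\big]\,\bC,
\]
so proving convergence in the mean to a non-mixing matrix together with local stability of that limit reduces to showing that every non-mixing $\bar{\bC}$ of the form \eqref{eq:non-mix} is a locally asymptotically stable equilibrium of this ODE.

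First I would verify the equilibrium property. Setting $Z=\bar{\bC}S$, assumption A1 guarantees that $z_1,z_2$ are independent with symmetric densities, hence every expectation containing an odd power of a single component vanishes. Consequently $\Exp[\bH(Z)]$ collapses to a diagonal matrix with entries $\Exp[z_i^2]-1$, which are killed by choosing the nonzero magnitudes $c_i$ of $\bar{\bC}$ so that $c_i^2\Exp[s_i^2]=1$; the anti-diagonal case in \eqref{eq:non-mix} is reduced to this one after a permutation. Thus every non-mixing matrix with the correct scales is an equilibrium.

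Next I would linearize. Parametrising $\bC=(\bI+\bDe)\bar{\bC}$ so that $\bC S=(\bI+\bDe)Z$, a first-order Taylor expansion of $\bH((\bI+\bDe)Z)$ in $\bDe$, followed by integration against the pdf of $Z$, produces a linear operator $\mathcal L(\bDe)$ on the space of $2\times 2$ matrices. Independence and symmetry annihilate all but a handful of cross moments, and $\mathcal L$ decouples into two invariant blocks: the diagonal entries obey $\dot\delta_{ii}=-2\delta_{ii}$, which is unconditionally contracting and reflects the role of the $ZZ^{\intercal}-\bI$ term in power normalisation; the off-diagonal pair $(\delta_{12},\delta_{21})$ evolves under a coupled $2\times 2$ system whose coefficients are polynomials in $\Exp[s_i^2]$, $\Exp[g_i'(s_i)]$ and $\Exp[s_ig_i(s_i)]$, i.e.\ only in $\kappa_1$ and $\kappa_2$.

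The final step is to verify that the off-diagonal block is Hurwitz precisely under A2. A short computation puts its characteristic polynomial into a form whose Routh--Hurwitz conditions become exactly $1+\kappa_1>0$, $1+\kappa_2>0$ and $(1+\kappa_1)(1+\kappa_2)>1$. The ``at most one Gaussian'' clause of A1 is exactly what keeps this block non-singular: integration by parts (Stein's identity) yields $\kappa_i=0$ whenever $s_i$ is Gaussian, so two Gaussians would force $(1+\kappa_1)(1+\kappa_2)=1$, making the determinant of the off-diagonal block vanish and leaving the mixing direction un-damped. I expect the main technical obstacle to be the bookkeeping in the linearisation step---tracking which cross moments survive after using A1 and verifying the clean decoupling between the ``scale'' and ``mixing'' subspaces---after which A2 emerges mechanically as the Hurwitz criterion for local stability.
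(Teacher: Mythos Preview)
The paper does not actually prove Theorem~\ref{th:1}: its entire proof reads ``The proof can be found in \cite{c5}.'' Consequently there is no in-paper argument to compare against, only the classical stability analysis of Cardoso to which the authors defer.

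Your proposal is precisely that classical analysis and is sound. The ODE/mean-field reduction, the verification that non-mixing matrices with $c_i^2\Exp[s_i^2]=1$ are equilibria, the multiplicative perturbation $\bC=(\bI+\bDe)\bar\bC$, and the decoupling of the linearized dynamics into a ``scale'' block (governed by the whitening term, unconditionally stable) and a ``mixing'' block (governed by the skew term, whose Routh--Hurwitz conditions are exactly A2) all match the argument in \cite{c5}. Your use of Stein's identity to explain why two Gaussians force $\kappa_1=\kappa_2=0$ and hence $(1+\kappa_1)(1+\kappa_2)=1$ is also correct and is the standard interpretation of the non-Gaussianity clause. One small caveat in the bookkeeping: since the equilibrium rescales the sources to unit variance, the $\kappa_i$ appearing in the off-diagonal block are computed for the \emph{normalized} variables $\bar z_i=c_is_i$; you should check (it is immediate) that this rescaling leaves the quantity $\Exp[g_i'(\bar z_i)]\Exp[\bar z_i^2]-\Exp[\bar z_i g_i(\bar z_i)]$ with the same sign structure as the $\kappa_i$ defined in A2, so that A2 is indeed the correct Hurwitz condition after normalization.
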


\begin{proof} The proof can be found in \cite{c5}.\end{proof}

Theorem~1 does not guarantee global
convergence because of the nonlinear form of (\ref{eq:calgor}).
Worth mentioning is also the fact that in (\ref{eq:ex_H}) the first term in
$\bH(Z)$, which uses only second order moments, plays
the role of a whitener of the observation vector $X_t$, whereas the
second term, with the help of nonlinear statistics, imposes
the final independence and achieves separation. The literature on BSS is very rich. One can find a detailed review of the existing methodologies for the case of independent sources in \cite{c3}.

\section{Proposed Algorithmic Scheme}

In this work, we extend the above result in two major directions. Specifically
\begin{itemize}
\item We show that there exists a rich
class of adaptive algorithms that can be applied to the BSS
problem with the same success as the algorithm defined in
\eqref{eq:B}, \eqref{eq:ex_H}. This algorithmic class not
only separates independent sources but also sources that are
{\it dependent}, provided that some simple symmetry
condition applies to the joint pdf of the source signals. It is in fact this symmetry that
guarantees separation and not independence. 

\item We identify the type of dependent random sources that {\it cannot
be separated} under our proposed general algorithmic scheme, hence
extending the non-Gaussianity requirement of the
independent case.
\end{itemize}

The motivation for considering dependent sources,
except of course the obvious theoretical
challenge, is the fact that even when sources are
independent under nominal conditions, once we consider
simple contamination models, independence can be easily lost.
For example if $\f(s_1,s_2)$ denotes the joint pdf of the two sources, the following $\epsilon$-contamination model does not correspond to independent sources
\begin{equation}
\f(s_1,s_2)=(1-\epsilon)\f_1(s_1)\f_2(s_2)+\epsilon \g_1(s_1)\g_2(s_2).
\label{eq:epscond}
\end{equation}
We see that with probability $1-\epsilon$
the two sources are independent following the pdfs $\f_1(s_1),\f_2(s_2)$ and with probability $\epsilon$ they are again independent but following the alternative pair of pdfs $\g_1(s_1),\g_2(s_1)$. It is a simple exercise to verify that $\f(s_1,s_2)$ does not correspond to independent sources. This raises the logical question as to whether the BSS algorithms will break under such mild divergence from the nominal conditions. There are of course applications \cite{c1,c2} where the source signals are genuinely dependent and we are interested in their separation. The existing literature for BSS methods for dependent sources is considerable. Here we only mention some representative articles for each available methodology. There are off-line techniques as Dependent Component Analysis \cite{c7,c8}, contrast functions \cite{c9}, time-frequency ratio of mixtures \cite{c10} and Kullback-Leibler divergence for copula densities \cite{c11} that are proposed to solve the problem. For on-line methods we find a technique based on nonnegative matrix factorization and the Kullback-Leibler divergence in \cite{c12}. For a more detailed list of references please consult: {\tt\small www.springeropen.com/collections/DCA}.

In this work, we adopt a purely algorithmic approach. Starting with the adaptive algorithm in \eqref{eq:B}, we examine what type of matrix functions $\bH(Z)$ can be employed and combined with what type of dependent sources in order for the algorithm in \eqref{eq:B} to be successful, namely, the algorithm in \eqref{eq:calgor} to converge to one of the non-mixing matrices. The goal is, whatever results we develop, to be applicable to a wide variety of signals without requiring exact knowledge of the statistical description of the sources.


Let us now introduce the adaptation we propose as a
general alternative to the existing algorithm in \eqref{eq:B} and \eqref{eq:ex_H}. Our
scheme also follows \eqref{eq:B} but with
the matrix function $\bH(Z)$ replaced by the more general
version
\begin{equation}
\bH(Z)=\left[\begin{array}{cc}
\h_{11}(z_1,z_2)&\h_{12}(z_1,z_2)\\
\h_{21}(z_1,z_2)&\h_{22}(z_1,z_2)
\end{array}\right].
\label{eq:new_H}
\end{equation}
For the analysis of the corresponding adaptive algorithm we will use the equivalent adaptation introduced in (\ref{eq:calgor}) with $\bH(Z)$ replaced by the expression introduced in \eqref{eq:new_H}. With our analysis we target the discovery of suitable constraints on $\bH(Z)$ that will guarantee the correct performance of the corresponding algorithm, namely its convergence to one of the non-mixing matrices depicted in \eqref{eq:non-mix}.

\section{Limits and Stability}
Adaptive algorithms can be analyzed using Stochastic Approximation theory \cite{c6} when the step size $\mu$ is ``small''. Our main interest lies with the convergence in the mean which we consider next.

\subsection{Limit in the Mean}
The mean field $\{\Exp[\bC_t]\}$ of the algorithm in \eqref{eq:calgor}, according to the Stochastic Approximation theory \cite{c13}, can be efficiently approximated by the sequence $\{\bar{\bC}_t\}$ defined by the recursion
\begin{equation}
\bar{\bC}_t=\bar{\bC}_{t-1}-\mu\Exp_S[\bH(\bar{\bC}_{t-1}S_t)]\bar{\bC}_{t-1}
\label{eq:meanfield}
\end{equation}
and the quality of the approximation is of the form
$$
\Exp[\bC_t]=\bar{\bC}_t+o(\sqrt{\mu}),
$$
where $\Exp_S[\cdot]$ denotes expectation only with respect to the source signal vector $S_t$. If we let $t\to\infty$ and assume that $\bar{\bC}_t\to\bar{\bC}_{\infty}$ with $\bar{\bC}_{\infty}$ being an invertible matrix we obtain the following equation for $\bar{\bC}_\infty$
\begin{equation}
\Exp_S[\bH(\bar{\bC}_{\infty}S_t)]=0.
\label{eq:limit}
\end{equation}
All matrices $\bar{\bC}_\infty$ that satisfy \eqref{eq:limit} are equilibrium points of the recursion in \eqref{eq:meanfield} and potential limits in the mean of the adaptive algorithm in \eqref{eq:calgor}. Whether a specific equilibrium can actually become the limit of the recursion in \eqref{eq:meanfield} is, of course, a question of stability of the particular equilibrium point. 

Let us ignore for the moment the stability issue and focus on the problem of \textit{imposing} a specific matrix as a possible equilibrium. We simply have to make sure that this matrix satisfies \eqref{eq:limit} when it replaces $\bar{\bC}_\infty$. 
To assure that the non-mixing matrices introduced in \eqref{eq:non-mix} are equilibrium points, we need for $i,j=1,2$ the following equations, corresponding to \eqref{eq:limit}, to be satisfied
\begin{equation}
\Exp[\h_{ij}(\pm c_1s_1,\pm c_2s_2)]=0
\label{eq:equat1}
\end{equation}
for the diagonal case, or
\begin{equation}
\Exp[\h_{ij}(\pm c_2s_2,\pm c_1s_1)]=0
\label{eq:equat2}
\end{equation}
for the anti-diagonal. We observe that, for simplicity, we have dropped the subscript ``$S$'' in the expectation $\Exp_S[\cdot]$ since, from now on, expectation is only with respect to the two sources.

We note that once the functions $\h_{ij}(z_1,z_2)$ are specified and the type of non-mixing matrix selected, then \eqref{eq:equat1} or \eqref{eq:equat2} constitutes a system of four equations in two unknowns ($c_1,c_2$). To have a solution of the desired form (diagonal or anti-diagonal) it is clear that two of the four equations must be satisfied \textit{automatically} and, most importantly, \textit{without exact knowledge} of the statistical description of the sources. It turns out that this is indeed possible if we assume that the joint pdf $\f(s_1,s_2)$ of the two sources exhibits the following property
\begin{equation}
\f(-s_1,s_2)=\f(s_1,-s_2)=\f(s_1,s_2),
\label{eq:quad_sym}
\end{equation}
corresponding to quadrantal symmetry. We should point out that \eqref{eq:quad_sym} is not an unfamiliar constraint. Indeed in the case of independent sources where $\f(s_1,s_2)=\f_1(s_1)\f_2(s_2)$ we recall from Theorem\,\ref{th:1}, Condition\,A1, that we need both marginal pdfs to be symmetric, which implies \eqref{eq:quad_sym}. Consequently we require the same symmetry to hold for the joint density when the two sources are dependent.

If we now impose some additional symmetries, this time on the functions $\h_{ij}(z_1,z_2)$, we can easily guarantee that the desired non-mixing matrices defined in \eqref{eq:non-mix} become equilibrium points. Specifically, we ask that the following conditions hold
\begin{align}
\begin{split}
&\h_{11}(-z_1,z_2)=\h_{11}(z_1,-z_2)=\h_{11}(z_1,z_2)\\
&\h_{22}(-z_1,z_2)=\h_{22}(z_1,-z_2)=\h_{22}(z_1,z_2)\\
&\h_{12}(-z_1,z_2)=\h_{12}(z_1,-z_2)=-\h_{12}(z_1,z_2)\\
&\h_{21}(-z_1,z_2)=\h_{21}(z_1,-z_2)=-\h_{21}(z_1,z_2).
\end{split}
\label{eq:mbafla1}
\end{align}
In other words, the two diagonal elements of the matrix $\bH(Z)$ must be \textit{even} functions in each of their arguments while the anti-diagonal \textit{odd} functions. There are two desirable consequences when these properties are combined with the quadrantal symmetry of the joint pdf $\f(s_1,s_2)$.
\begin{itemize}
\item For any $c_1,c_2$, we have $\Exp[\h_{12}(\pm c_1s_1,\pm c_2s_2)]=\Exp[\h_{12}(\pm c_2s_2,\pm c_1s_1)]=0$ and the same property is true for $\h_{21}(z_1,z_2)$. This suggests that two out of the four equations in \eqref{eq:equat1} or \eqref{eq:equat2} are satisfied for free and for all possible signs of the non-mixing matrix.

\item If $(c_1,c_2)$ are roots of the system of the two equations 
\begin{equation}
\Exp[\h_{11}(c_1s_1,c_2s_2)]=0,~\Exp[\h_{22}(c_1s_1,c_2s_2)]=0,
\label{eq:sys1}
\end{equation}
corresponding to a diagonal non-mixing matrix, or of the system 
\begin{equation}
\Exp[\h_{11}(c_2s_2,c_1s_1)]=0,~\Exp[\h_{22}(c_2s_2,c_1s_1)]=0,
\label{eq:sys2}
\end{equation}
corresponding to an anti-diagonal non-mixing matrix then so is any combination of signs $(\pm c_1,\pm c_2)$.
\end{itemize}
Both observations are very simple to demonstrate since they are a direct consequence of the symmetries imposed on $\h_{ij}(z_1,z_2)$ and $\f(s_1,s_2)$. Regarding $c_1,c_2$ we should point out that we only need their \textit{existence} since the exact values of these two quantities depend on the actual joint pdf $\f(s_1,s_2)$ which is assumed to be unknown. 

So far, through the symmetries imposed on $\f(s_1,s_2)$ in \eqref{eq:quad_sym} and on $\h_{ij}(z_1,z_2)$ in \eqref{eq:mbafla1}, we can guarantee that the desired non-mixing matrices are equilibrium points for the mean field adaptation \eqref{eq:meanfield}. However, in order for these equilibriums to be actually accessible as limits by the adaptation we also need to establish some form of stability.

\subsection{Local Stability}
The next step consists in examining under what conditions the non-mixing equilibrium points are in fact stable limits of \eqref{eq:meanfield}. We will present our analysis for $\bC=\mathrm{diag}\{c_1,c_2\}$, similar steps apply for the anti-diagonal case.

Establishing global stability in nonlinear updates is, unfortunately, very difficult and not always possible. We therefore limit ourselves (very common in adaptive algorithms) in testing only for \textit{local stability}. This means that we write $\bar{\bC}_t=\bC+\bDe_t$ where $\bDe_t$ is a perturbation matrix with ``small'' elements and analyze the evolution of $\bDe_t$ with $t$ using linear system approximation. Stability requires $\bDe_t\to0$ as $t\to\infty$. Specifically, assuming that the perturbation matrix is of the form
\begin{equation}
\bDe_t=\begin{bmatrix}\alpha_t&\gamma_t\\ \delta_t&\beta_t\end{bmatrix}
\label{eq:Delta}
\end{equation}
we have the following lemma that captures the evolution of $\bDe_t$.

\begin{lemma}\label{lem:0}
The elements of the perturbation matrix $\bDe_t$ satisfy the following recursions
\begin{align}
\begin{split}
&\begin{bmatrix}\alpha_t\\ \beta_t\end{bmatrix}=\bC(\bI+\mu\bF)\bC^{-1}\begin{bmatrix}\alpha_{t-1}\\ \beta_{t-1}\end{bmatrix}\\
&\begin{bmatrix}\gamma_t\\ \delta_t\end{bmatrix}=\bC(\bI+\mu\bG)\bC^{-1}\begin{bmatrix}\gamma_{t-1}\\ \delta_{t-1}\end{bmatrix},
\end{split}
\label{eq:mbofla9}
\end{align}
where 
\begin{align}
\!\!\bF&=\textstyle\Exp\left[\begin{bmatrix}\h_{11}(c_1s_1,c_2s_2)\\\h_{22}(c_1s_1,c_2s_2)\end{bmatrix}\!\! 
\left[\frac{s_1\f_{s_1}(s_1,s_2)}{\f(s_1,s_2)}~\frac{s_2\f_{s_2}(s_1,s_2)}{\f(s_1,s_2)}\right]\right]
\label{eq:mbofla101}\\
\!\!\bG&=\textstyle\Exp\left[\begin{bmatrix}\h_{12}(c_1s_1,c_2s_2)\\\h_{21}(c_1s_1,c_2s_2)\end{bmatrix}\!\! 
\left[\frac{s_2\f_{s_1}(s_1,s_2)}{\f(s_1,s_2)}~\frac{s_1\f_{s_2}(s_1,s_2)}{\f(s_1,s_2)}\right]\right],~
\label{eq:mbofla102}
\end{align}
$\f_{s_i}(s_1,s_2)=\frac{\partial \f(s_1,s_2)}{\partial s_i}$ and expectation in both formulas is with respect to the joint source pdf $\f(s_1,s_2)$.
\end{lemma}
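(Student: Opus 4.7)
The plan is to linearize the mean-field recursion \eqref{eq:meanfield} around the proposed diagonal equilibrium $\bC=\mathrm{diag}\{c_1,c_2\}$. Setting $\bar{\bC}_{t-1}=\bC+\bDe_{t-1}$ and using the equilibrium identity $\Exp[\bH(\bC S_t)]=0$ (which holds for the diagonal entries by \eqref{eq:sys1} and for the anti-diagonal entries for free by the parities in \eqref{eq:mbafla1} combined with \eqref{eq:quad_sym}), the contribution $\mu\Exp[\bH(\bar{\bC}_{t-1}S_t)]\,\bDe_{t-1}$ is already second order in $\bDe_{t-1}$ and may be discarded. The linearized update therefore reads $\bDe_t=\bDe_{t-1}-\mu\,\Exp[\bH(\bar{\bC}_{t-1}S_t)]\,\bC$, so the task reduces to computing the matrix $\Exp[\bH(\bar{\bC}_{t-1}S_t)]$ to first order in the four entries $\alpha_{t-1},\beta_{t-1},\gamma_{t-1},\delta_{t-1}$.

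A scalar Taylor expansion of each $\h_{ij}$ about $(c_1s_1,c_2s_2)$ writes that first-order contribution as a linear combination of expectations of the form $\Exp[s_k\,\partial_l\h_{ij}(c_1s_1,c_2s_2)]$ with $k,l\in\{1,2\}$. I then convert each such coefficient via integration by parts in $s_l$: using $\partial_l\h_{ij}(c_1s_1,c_2s_2)=(1/c_l)\,\partial_{s_l}\h_{ij}(c_1s_1,c_2s_2)$ and integrating against $\f\,ds_l$ yields $-(1/c_l)\,\Exp[\h_{ij}(c_1s_1,c_2s_2)\,s_k\f_{s_l}/\f]$, plus (only when $k=l$) an additional term $-(1/c_l)\Exp[\h_{ij}]$ arising from $\partial_{s_l}(s_l\f)=\f+s_l\f_{s_l}$; this extra term vanishes by the equilibrium condition. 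Boundary terms are assumed to vanish under mild decay of $\f$ and polynomial control on $\h_{ij}$.

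The decisive simplification is a parity count. By \eqref{eq:quad_sym} the derivative $\f_{s_l}$ is odd in $s_l$ and even in the other argument, while by \eqref{eq:mbafla1} the entries $\h_{11},\h_{22}$ are even in each argument and $\h_{12},\h_{21}$ are odd in each argument. Inspecting the parities of the integrands $\h_{ij}(c_1s_1,c_2s_2)\,s_k\f_{s_l}/\f$ shows that for $i=j$ the two cross coefficients $\Exp[\h_{ii}\,s_2\f_{s_1}/\f]$ and $\Exp[\h_{ii}\,s_1\f_{s_2}/\f]$ integrate a function odd in one of the arguments over a quadrantally symmetric pdf and therefore vanish; symmetrically, for $i\neq j$ the "diagonal" coefficients $\Exp[\h_{ij}\,s_1\f_{s_1}/\f]$ and $\Exp[\h_{ij}\,s_2\f_{s_2}/\f]$ vanish. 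This is exactly what makes the $(\alpha_t,\beta_t)$ block (fed by the diagonal of $\Exp[\bH]$) decouple from the $(\gamma_t,\delta_t)$ block (fed by the anti-diagonal), matching the structure of \eqref{eq:mbofla9}.

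Finally I assemble the surviving pieces. Right-multiplication by $\bC$ multiplies column $j$ of $\Exp[\bH]$ by $c_j$, and this $c_j$ combines with the $1/c_l$ produced by the integration by parts to yield precisely a similarity conjugation by $\bC$: the $(1,1)$ and $(2,2)$ entries of $\Exp[\bH]\bC$ reproduce $-\mu\,\bC\bF\bC^{-1}\,(\alpha_{t-1},\beta_{t-1})^\intercal$ with $\bF$ as in \eqref{eq:mbofla101}, and the $(1,2),(2,1)$ entries reproduce $-\mu\,\bC\bG\bC^{-1}\,(\gamma_{t-1},\delta_{t-1})^\intercal$ with $\bG$ as in \eqref{eq:mbofla102}. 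I expect the main obstacle to be not conceptual but purely bookkeeping: tracking exactly which of the sixteen linearization coefficients survive, with what sign, and checking that the $c_l$-factors line up to yield a clean conjugation by $\bC$ on both sides rather than a messier expression. Once the parity cancellations of the previous paragraph are secured, the final matrix form is a mechanical exercise.
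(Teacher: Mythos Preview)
Your proposal is correct and mirrors the paper's proof almost step for step: linearize \eqref{eq:meanfield} by replacing the trailing $\bar{\bC}_{t-1}$ with $\bC$ (since $\Exp[\bH(\bC S)]=0$ makes the neglected piece second order), Taylor-expand each $\h_{ij}$, use the parity assumptions \eqref{eq:quad_sym}--\eqref{eq:mbafla1} to kill the cross coefficients and obtain the $(\alpha,\beta)/(\gamma,\delta)$ decoupling, and then integrate by parts (invoking the boundary decay \eqref{eq:ouff} and the equilibrium relation \eqref{eq:sys1}) to rewrite the surviving coefficients in the score-like form that defines $\bF,\bG$. The only cosmetic difference is order of operations: the paper applies the parity cancellation directly to the Taylor coefficients $\Exp[s_k\partial_l\h_{ij}]$ and integrates by parts only on the survivors, whereas you integrate by parts first and check parity afterwards; both routes yield the same eight nonzero entries.
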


\begin{proof} If we assume that $\f(s_1,s_2)$ is uniformly bounded then in order for its two marginal densities to exist we need $\lim_{s_1\to\pm\infty}\f(s_1,\cdot)=\lim_{s_2\to\pm\infty}\f(\cdot,s_2)=0$. In fact we need to strengthen this property slightly so that the two expectations in \eqref{eq:mbofla101} and \eqref{eq:mbofla102} are bounded. In particular, for any fixed constants $c_1,c_2$ we require
\begin{align}
\begin{split}
&\lim_{s_1\to\pm\infty}\h_{ij}(c_1s_1,\cdot)s_1\f(s_1,\cdot)=0\\
&\lim_{s_2\to\pm\infty}\h_{ij}(\cdot,c_2s_2)s_2\f(\cdot,s_2)=0.
\end{split}
\label{eq:ouff}
\end{align}
Details of the proof are given in the Appendix.\end{proof}

From the recursions in \eqref{eq:mbofla9} we can find conditions that assure local stability of the desired equilibrium. The next lemma discusses exactly this point.

\begin{lemma}\label{lem:3}
The equilibrium point $\bC$ is locally stable if and only if the following inequalities hold
\begin{equation}
\mathrm{tr}\{\bF\}<0,\mathrm{det}\{\bF\}>0,~~~\mathrm{tr}\{\bG\}<0,\mathrm{det}\{\bG\}>0,
\label{eq:R-H}
\end{equation}
where $\mathrm{tr}\{\cdot\},\mathrm{det}\{\cdot\}$ denote trace and determinant respectively.
\end{lemma}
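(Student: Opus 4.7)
The plan is to leverage the fact that Lemma~\ref{lem:0} already decouples the perturbation matrix $\bDe_t$ into two independent $2$-dimensional linear recursions, one driven by $\bC(\bI+\mu\bF)\bC^{-1}$ and the other by $\bC(\bI+\mu\bG)\bC^{-1}$. Since $\bDe_t\to 0$ iff both recursions are contractions, local stability reduces to a purely spectral question about $\bI+\mu\bF$ and $\bI+\mu\bG$.

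First I would observe that $\bC=\mathrm{diag}\{c_1,c_2\}$ is invertible (as $c_1,c_2>0$ by the non-mixing condition), so $\bC(\bI+\mu\bF)\bC^{-1}$ and $\bI+\mu\bF$ are similar and share the same eigenvalues, and likewise for $\bG$. Thus the two recursions in \eqref{eq:mbofla9} are asymptotically stable if and only if every eigenvalue of $\bI+\mu\bF$ and every eigenvalue of $\bI+\mu\bG$ lies strictly inside the unit disc. For any eigenvalue $\lambda$ of $\bF$ (or $\bG$), one has $|1+\mu\lambda|^2=1+2\mu\,\mathrm{Re}(\lambda)+\mu^2|\lambda|^2$, so for $\mu>0$ sufficiently small — which is exactly the regime in which the mean-field approximation \eqref{eq:meanfield} is used — the contraction condition $|1+\mu\lambda|<1$ is equivalent to $\mathrm{Re}(\lambda)<0$. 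Hence local stability is equivalent to all eigenvalues of $\bF$ and of $\bG$ having strictly negative real parts.

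Second, I would apply the Routh–Hurwitz criterion for real $2\times 2$ matrices. For such a matrix $\bM$, the characteristic polynomial is $\lambda^2-\mathrm{tr}\{\bM\}\lambda+\det\{\bM\}$, and a direct case analysis (real pair vs.\ complex conjugate pair) shows that both roots have strictly negative real part iff $\mathrm{tr}\{\bM\}<0$ and $\det\{\bM\}>0$. Applying this twice, once to $\bM=\bF$ and once to $\bM=\bG$, yields precisely the four inequalities \eqref{eq:R-H}.

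There is no serious obstacle in this argument; every step is elementary linear algebra once Lemma~\ref{lem:0} has produced the decoupled linearized dynamics. The only point that deserves care is the implicit ``$\mu$ small enough'' qualifier, which is already built into the stochastic approximation framework and the small-signal linearization $\bar{\bC}_t=\bC+\bDe_t$ used throughout this subsection, so the iff statement should be read as asymptotic local stability of the mean-field recursion in that regime.
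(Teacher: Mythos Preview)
Your proof is correct and follows essentially the same approach as the paper's: reduce local stability of the decoupled recursions from Lemma~\ref{lem:0} to the spectral condition that $\bI+\mu\bF$ and $\bI+\mu\bG$ have eigenvalues inside the unit disc, observe that for small $\mu$ this is equivalent to $\bF,\bG$ having eigenvalues with strictly negative real parts, and then invoke the Routh--Hurwitz criterion for $2\times2$ matrices. You supply slightly more detail (the similarity argument and the explicit expansion of $|1+\mu\lambda|^2$), but the logic is identical.
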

\begin{proof} For local stability we need the two matrices $\bI+\mu\bF$, $\bI+\mu\bG$ to have their eigenvalues in the interior of the unit circle. This can happen \textit{for all sufficiently small step sizes} $\mu>0$ if and only if the two matrices $\bF,\bG$ have eigenvalues with \textit{strictly} negative real parts. The two inequalities applied to each matrix correspond to the Routh-Hurwitz criterion that assures this fact.\end{proof}

\noindent\textbf{Remark\,1:} From our local analysis we observe that the mean estimates, when they are close to the limit, converge to the equilibrium \textit{exponentially} fast in the form of $(\bI+\mu\bF)^t$ and $(\bI+\mu\bG)^t$. In other words we have an exponential rate of convergence which is proportional to $\mu$. Of course this is true, provided that the conditions of Lemma\,\ref{lem:3} apply. As we can see, a smaller $\mu$ reduces the convergence speed towards the desired limit.
\vskip0.1cm

\noindent\textbf{Remark\,2:} We devoted all our efforts to assure convergence in the mean of the algorithmic scheme in \eqref{eq:calgor} to the desired non-mixing equilibrium. However, mean convergence by itself cannot guarantee satisfactory estimates. It is equally important that the \textit{variance} of the corresponding estimates is small. Fortunately, regarding this point, Stochastic Approximation comes to our rescue. Specifically, it is known \cite{c6} that when the limit in the mean is stable the corresponding covariance matrix of the estimates in \eqref{eq:calgor}, at steady-state, is proportional to $\mu$. Actually, there are even formulas that can compute the steady-state covariance matrix up to a first order approximation in $\mu$. Since the step size is selected to be small this suggests that, at steady-state, our estimates will differ from the desired non-mixing matrix by a random amount that has small power. Decreasing $\mu$ provides better steady-state estimates but, as mentioned in the previous remark, results in longer convergence periods of the mean field toward its desired limit.

\section{Non-Separable Sources}
One of the main issues in BSS is to identify the type of sources that cannot be separated. When the two sources are independent it is well known that the only combination that is non-separable by any off- or on-line method is the case of two Gaussians. If we allow the sources to be dependent with a joint pdf satisfying the symmetry in \eqref{eq:quad_sym} then the class of sources that are non-separable may increase. Unfortunately, under dependency it is very difficult to develop results of the same generality as in the independent case. Consequently, in order to come up with something meaningful, we propose a more modest characterization of the non-separable sources which, we believe, is equally interesting.

\vspace{0.1cm}
\noindent\textbf{Definition:} \textit{Two sources will be called non-separable if there is no algorithm of the form of \eqref{eq:calgor} for which a non-mixing equilibrium point $\bC$ defined in \eqref{eq:non-mix} is stable.}
\vskip0.1cm
\noindent In other words, instead of referring to any on- or off-line method, we relate the separability property to our general algorithmic scheme. If our algorithm is unable to converge to a non-mixing matrix no matter which functions $\h_{ij}(z_1,z_2)$ we employ, then we regard the corresponding sources as non-separable.

The equilibrium point is unstable if at least one of the two matrices $\bF,\bG$ has at least one eigenvalue with positive or \textit{zero} real part. Clearly this fact must be shared by \textit{all} combinations of functions $\h_{ij}(z_1,z_2)$ with symmetries specified in \eqref{eq:mbafla1}. We have the following theorem that identifies the joint probability density of sources that are non-separable, according to our definition.

\begin{theorem}\label{th:2}
Two dependent sources $s_1,s_2$ are non-separable by any version of the algorithm in \eqref{eq:calgor} if and only if their joint pdf $\f(s_1,s_2)$ is of the following form
\begin{equation}
\f(s_1,s_2)=\omega(K_2s_1^2+K_1s_2^2),
\label{eq:th2-1}
\end{equation}
where $\omega(z)$ is a univariate function of $z$ and $K_1,K_2$ are positive constants.
\end{theorem}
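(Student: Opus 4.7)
The proof splits into the two directions of the biconditional.

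For sufficiency, assume $\f(s_1,s_2)=\omega(K_2 s_1^2 + K_1 s_2^2)$. Direct differentiation yields $s_2\f_{s_1}/\f = 2K_2\,s_1 s_2\,(\omega'/\omega)$ and $s_1\f_{s_2}/\f = 2K_1\,s_1 s_2\,(\omega'/\omega)$, so the two kernels appearing in \eqref{eq:mbofla102} differ only by the constant factor $K_2/K_1$. The two columns of $\bG$ are therefore proportional for every admissible odd--odd pair $\h_{12},\h_{21}$, forcing $\det\{\bG\}=0$. By Lemma~\ref{lem:3} no non-mixing equilibrium is locally stable, which is exactly non-separability.

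For necessity, define the odd--odd kernels $\phi_1 = s_2\f_{s_1}/\f$ and $\phi_2 = s_1\f_{s_2}/\f$; their odd--odd parity follows from the quadrantal symmetry of $\f$, which makes $\f_{s_i}/\f$ odd in $s_i$ and even in the other variable. The four entries of $\bG$ are then the four inner products $\Exp[\h\,\phi]$ with $\h\in\{\h_{12},\h_{21}\}$ and $\phi\in\{\phi_1,\phi_2\}$, taken in $L^2(\f)$. If $\phi_1,\phi_2$ are linearly independent, the linear map $\h\mapsto(\Exp[\h\phi_1],\Exp[\h\phi_2])$ restricted to the odd--odd admissible subspace is surjective onto $\mathbb{R}^2$, so one may pick $\h_{12},\h_{21}$ realizing $\bG=-\bI$, which is Hurwitz; an entirely analogous construction for the mean-zero even--even freedom in $\h_{11},\h_{22}$ (compatible with \eqref{eq:sys1}) renders $\bF$ Hurwitz as well. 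This contradicts non-separability, so we must have $\phi_1=\lambda\phi_2$ for some scalar $\lambda$, i.e.\ $s_2\f_{s_1}=\lambda s_1\f_{s_2}$.

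Because quadrantal symmetry lets $\f$ be regarded as a function of $u=s_1^2$ and $v=s_2^2$, this proportionality reduces to the first-order linear PDE $\f_u = \lambda \f_v$, whose general solution is $\f(u,v)=\omega(\lambda u + v)$. In the original variables $\f(s_1,s_2)=\omega(K_2 s_1^2 + K_1 s_2^2)$ with $\lambda = K_2/K_1$, exactly the form \eqref{eq:th2-1}. Positivity of $K_1,K_2$ is enforced by integrability of $\f$: nonpositive $\lambda$ produces level sets of $\f$ that extend to infinity without decay, preventing normalization.

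The main obstacle lies in the necessity direction, specifically the functional-analytic step that converts linear independence of $\phi_1,\phi_2$ into an admissible concrete choice of $\h_{12},\h_{21}$ realizing $\bG=-\bI$. One must ensure the surjectivity onto $\mathbb{R}^2$ is attained by functions respecting the boundary-decay condition \eqref{eq:ouff} from Lemma~\ref{lem:0}, and simultaneously coordinate the analogous construction for $\h_{11},\h_{22}$ so that the same $c_1,c_2$ produced via \eqref{eq:sys1} yields a Hurwitz $\bF$. Handling this coupling carefully, and verifying that any candidate densities arising from $\bF$-degeneracy rather than $\bG$-degeneracy fail to be globally integrable, is where the bulk of the technical work resides.
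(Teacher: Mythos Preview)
Your proposal is essentially correct and follows the same overall architecture as the paper's proof: reduce non-separability to linear dependence of the two score-type kernels, translate that dependence into a first-order PDE, solve it, and rule out the spurious branch by non-integrability. The sufficiency direction and the PDE analysis match the paper closely.

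The one substantive difference is in how necessity is established. You argue abstractly: if $\phi_1,\phi_2$ are linearly independent in the odd--odd subspace, then the map $\h\mapsto(\Exp[\h\phi_1],\Exp[\h\phi_2])$ is onto $\mathbb{R}^2$, so some admissible $\h_{12},\h_{21}$ realize $\bG=-\bI$. The paper instead makes a single concrete choice, $\h_{ij}=-\phi_j$ (the score kernels themselves, with $c_1=c_2=1$), which turns $\bF_*$ and $\bG_*$ into negative Gram matrices of the score vectors. Instability then forces a zero eigenvalue, i.e.\ linear dependence, \emph{immediately}, with no surjectivity or approximation argument and no worry about the decay condition \eqref{eq:ouff} or about coordinating the two systems \eqref{eq:sys1}. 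This trick also dispatches the $\bF$/$\bG$ coupling you flag as the main obstacle: the same choice handles both matrices at once, producing the two PDEs \eqref{eq:App-1} and \eqref{eq:App-2} simultaneously. The paper then solves \eqref{eq:App-1} explicitly (obtaining $\f=\omega(|s_1|^{K_2}|s_2|^{K_1})$) and shows its marginal is a non-integrable power law, which is the ``$\bF$-degeneracy'' case you correctly identify as needing work but do not carry out. Your route is sound, but the Gram-matrix shortcut is what makes the paper's argument clean.
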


\begin{proof} The proof is very interesting and requires several steps. All details are given in the Appendix.\end{proof}

\begin{figure}[h]
\centerline{\includegraphics[width=0.6\hsize]{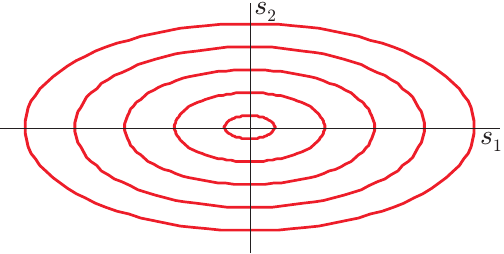}}
\vskip-0.2cm
\caption{Contour lines of the joint pdf of a pair of non-separable dependent sources.}
\label{fig:1}
\end{figure}
Theorem\,\ref{th:2} identifies as non-separable, the sources with joint pdf $\f(s_1,s_2)$ that exhibits \textit{elliptical} quadrantal symmetry (due to the term $K_1s_1^2+K_2s_2^2$). Fig.\,\ref{fig:1} captures the typical form of the contour lines of the corresponding joint pdf.
An interesting question is what happens when we apply our definition of non-separability
to the independent case. 
In particular, we would like to know whether our definition generates any additional, to the classical Gaussian pair, sources. The next corollary provides the necessary answer.

\begin{corollary}
When the two sources $s_1,s_2$ are independent the only combination which is non-separable by the algorithm in \eqref{eq:calgor} is the classical case of Gaussian sources.
\end{corollary}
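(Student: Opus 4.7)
The plan is to leverage Theorem~\ref{th:2} directly: under independence, $\f(s_1,s_2)=\f_1(s_1)\f_2(s_2)$, so non-separability is equivalent to the existence of positive constants $K_1,K_2$ and a univariate function $\omega$ with
\[
\f_1(s_1)\f_2(s_2)=\omega(K_2s_1^2+K_1s_2^2).
\]
The task then reduces to showing that this multiplicative-into-radial functional equation forces both $\f_1$ and $\f_2$ to be (zero-mean) Gaussian densities. The converse direction, that two independent zero-mean Gaussians give a joint pdf of the form \eqref{eq:th2-1} with $\omega(z)=Ce^{-z/2}$, is a one-line check, so I would handle it first and then focus on the necessity part.

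For the necessity, the approach I would take is to pass to logarithms and differentiate. Set $\phi_i=\log\f_i$ and $\psi=\log\omega$; the equation becomes $\phi_1(s_1)+\phi_2(s_2)=\psi(K_2s_1^2+K_1s_2^2)$ on the support. Differentiating in $s_1$ and $s_2$ separately yields
\[
\phi_1'(s_1)=2K_2s_1\,\psi'(K_2s_1^2+K_1s_2^2),\quad \phi_2'(s_2)=2K_1s_2\,\psi'(K_2s_1^2+K_1s_2^2),
\]
so $\phi_1'(s_1)/(2K_2s_1)=\phi_2'(s_2)/(2K_1s_2)$ for all $s_1,s_2$. Since the left side depends only on $s_1$ and the right only on $s_2$, both equal a constant $\lambda$, and integration gives $\phi_1(s_1)=K_2\lambda s_1^2+c_1$, $\phi_2(s_2)=K_1\lambda s_2^2+c_2$. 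Exponentiating,
\[
\f_1(s_1)=e^{c_1}e^{K_2\lambda s_1^2},\qquad \f_2(s_2)=e^{c_2}e^{K_1\lambda s_2^2}.
\]
Integrability of a pdf forces $\lambda<0$, at which point both marginals are zero-mean Gaussians with variances $\sigma_i^2=-1/(2K_{3-i}\lambda)$, completing the argument.

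The main obstacle, and the only genuinely delicate point, is justifying the regularity used in the differentiation step, since Theorem~\ref{th:2} only asserts that $\f$ equals the radial form as functions, not that $\omega$ is smooth. I would address this by remarking that the quadrantal symmetry already forces $\f_1,\f_2$ to be symmetric, and that on the open set where the marginals are strictly positive (and continuous, which holds for any pdf one can plug into Lemma~\ref{lem:0}) the relation $\omega(K_2s_1^2+K_1s_2^2)=\f_1(s_1)\f_2(s_2)$ determines $\omega$ as a product of smooth-enough factors along the coordinate lines, so the differentiation is legitimate; in the pathological case where one of the marginals has a discrete zero set, an approximation/density argument over the positivity region suffices. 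Modulo this regularity checkpoint, the corollary follows purely from the functional equation solved above.
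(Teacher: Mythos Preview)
Your proposal is correct and follows essentially the same route as the paper: starting from Theorem~\ref{th:2}, you differentiate the identity $\f_1(s_1)\f_2(s_2)=\omega(K_2s_1^2+K_1s_2^2)$ in each variable, separate variables to force a common constant, and integrate to obtain Gaussian marginals. The only cosmetic difference is that you pass to logarithms before differentiating while the paper differentiates the product directly and then divides; your added remarks on the converse direction, on $\lambda<0$ from integrability, and on the regularity needed to justify differentiation are refinements the paper leaves implicit.
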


\begin{proof} 
When the two sources are independent then $\f(s_1,s_2)=\f_1(s_1)\f_2(s_2)$. If we use this fact in \eqref{eq:th2-1} and take the derivative with respect to $s_1$ and $s_2$ we obtain the following equalities
\begin{align*}
\f'_1(s_1)\f_2(s_2)&=2K_2s_1\omega_z(K_2s_1^2+K_1s_2^2)\allowdisplaybreaks\\\allowdisplaybreaks
\f_1(s_1)\f'_2(s_2)&=2K_1s_2\omega_z(K_2s_1^2+K_1s_2^2),
\end{align*}
where $\omega_z(a)=\frac{d\omega(z)}{dz}|_{z=a}$. From the two equations we conclude that
$$
\textstyle\frac{\f'_1(s_1)\f_2(s_2)}{2K_2s_1}=\frac{\f_1(s_1)\f'_2(s_2)}{2K_1s_2}
$$
which suggests
$$
\textstyle\frac{\f_1'(s_1)}{\f_1(s_1)2K_2s_1}=\frac{\f_2'(s_2)}{\f_2(s_2)2K_1s_2}=K.
$$
$K$ must be a function solely of $s_1$ and at the same time a function solely of $s_2$, therefore it is necessarily a constant. The previous expression gives rise to two differential equations in $s_1$ and $s_2$, with solutions $\f_1(s_1)=A_1e^{KK_2s_1^2}$, $\f_2(s_2)=A_2e^{KK_1s_2^2}$, i.e.~Gaussian pdfs. The corresponding function $\omega(z)$ has the form $\omega(z)=A_1A_2 e^{Kz}$.
\end{proof}

The Corollary guarantees that, even if we limit ourselves to separation algorithms of the form of \eqref{eq:calgor}, this does not augment the class of non-separable sources when the sources are independent. This result was, in a sense, expected since from the literature we know that adaptive algorithms of the form of \eqref{eq:B}, with $\bH(Z)$ as in \eqref{eq:ex_H}, in simulation were seen to be able to separate independent sources except, of course, Gaussian pairs. Since our model for $\bH(Z)$ in \eqref{eq:new_H}, with the particular symmetries imposed in \eqref{eq:mbafla1}, is more general than \eqref{eq:ex_H}, the corresponding adaptive algorithm will also be capable of separating the same class of independent sources. Of course, the main value of Theorem\,\ref{th:2} comes from the fact that it identifies non-separable \textit{dependent} sources which is clearly not a straighforward extension of the Gaussian-pair of the independent case.

In the next section we give examples of classical and non-classical $\bH(Z)$ matrices and we test, using simulations, their capability to separate dependent sources. We also give examples of sources with elliptical quadrantal symmetry and verify that the algorithm in \eqref{eq:calgor} is unable to perform separation.

\section{Examples}
Let us start with the example where the pair $(s_1,s_2)$ is a mixture of independent Gaussian random variables. Specifically, with probability 0.5 the two sources $s_1$ and $s_2$ are independent $\mathcal{N}(0,1)$, $\mathcal{N}(0,4)$, while with probability 0.5 they are again independent $\mathcal{N}(0,4)$, $\mathcal{N}(0,1)$ respectively. We consider two cases for the $\bH(Z)$ matrix
\begin{align*}
\bH(Z)&=\begin{bmatrix}z_1^2-1&z_1z_2+z_1z_2^3-z_2z_1^3\\
z_1z_2+z_2z_1^3-z_1z_2^3&z_2^2-1\end{bmatrix}\\
\bH(Z)&=\begin{bmatrix}|z_1|-1&z_1z_2^2\textrm{sgn}(z_2)\\
z_2z_1^2\textrm{sgn}(z_1)&|z_2|-1\end{bmatrix}.
\end{align*}
The first matrix corresponds to the classical version introduced in \eqref{eq:ex_H} and, as we can see, it whitens the observations. On the other hand, the second matrix does not contain any whitening part. Both selections satisfy the symmetry properties set in \eqref{eq:mbafla1}. Furthermore, the analysis of the corresponding matrices $\bF,\bG$ assures validity of \eqref{eq:R-H} for stability.
\begin{figure}[!h]
\vskip-0.2cm
\centerline{\includegraphics[width=0.44\hsize]{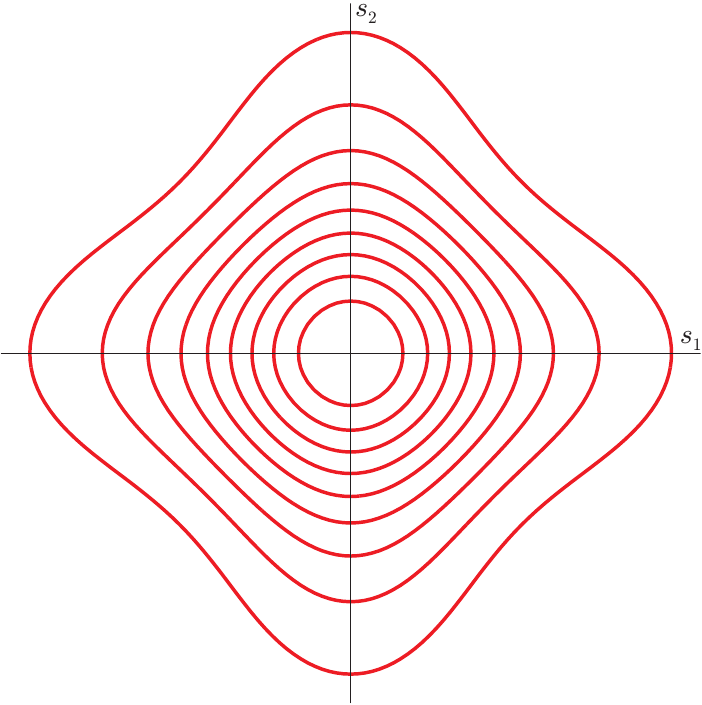}}
\vskip-0.1cm
\centerline{\footnotesize(a)}
\vskip0.3cm
\centerline{\includegraphics[width=0.48\hsize]{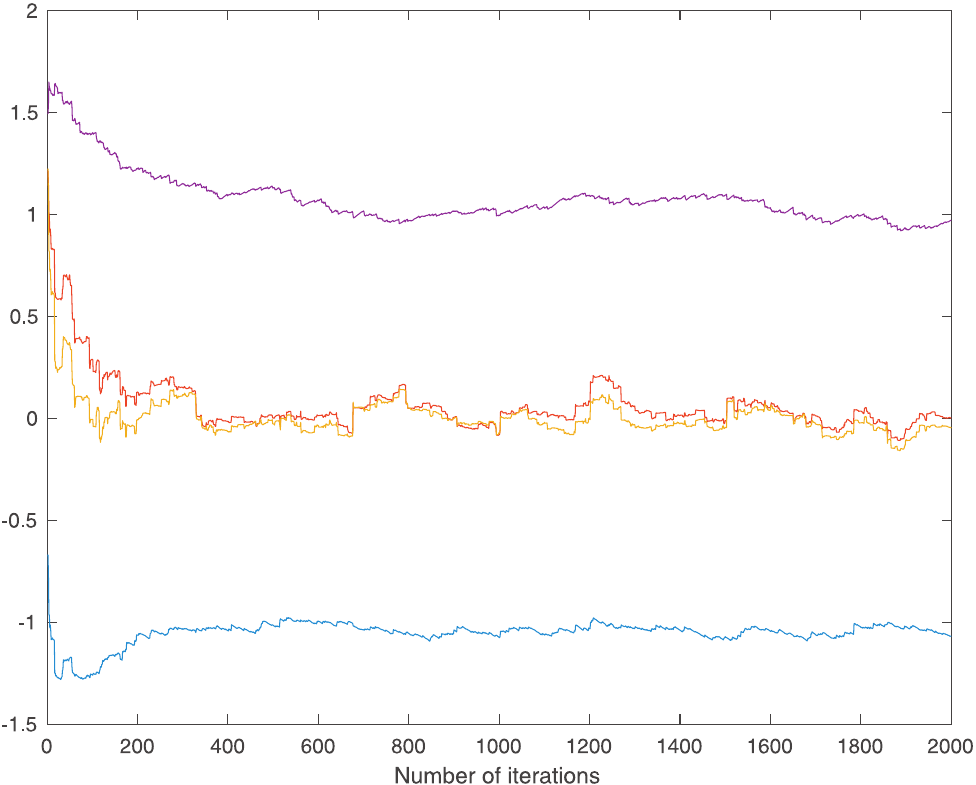}\hfill\includegraphics[width=0.48\hsize]{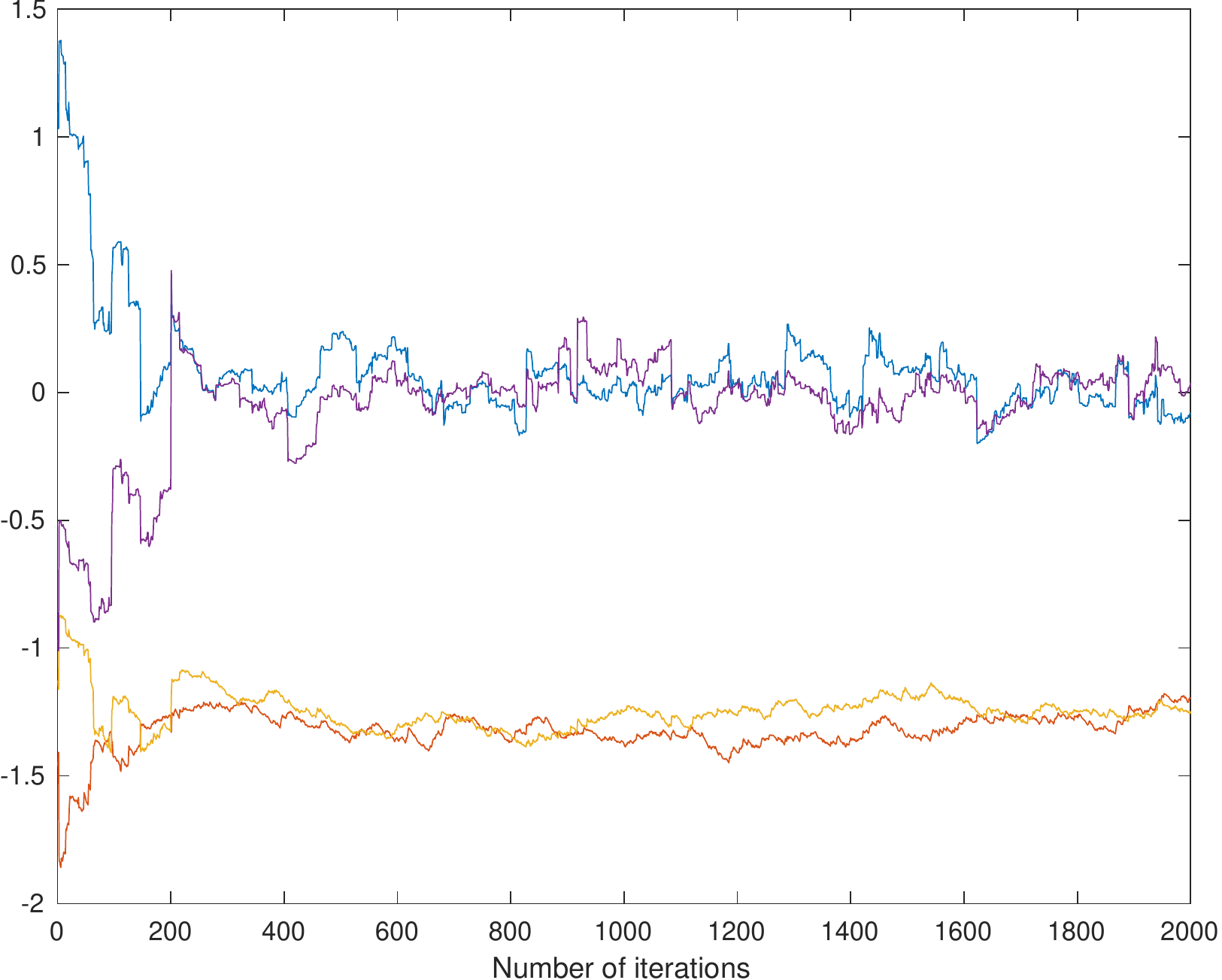}}
\vskip-0.1cm
\centerline{\hbox to 0.48\hsize{\hfill\footnotesize(b)\hfill}\hfill\hbox to 0.48\hsize{\hfill\footnotesize(c)\hfill}}
\vskip-0.1cm
\caption{(a) Contour lines of the joint pdf of two sources $s_1,s_2$ that are independent $\mathcal{N}(0,1),\mathcal{N}(0,4)$ with probability 0.5 and independent $\mathcal{N}(0,4),\mathcal{N}(0,1)$ with probability 0.5.  Evolution of the elements of $\bC_t$ with the number of iterations (b) classical, (c) without whitening.}
\label{fig:2}
\vskip-0.2cm
\end{figure}
 Fig.\,\ref{fig:2} depicts the simulation results. 
In Fig.\,\ref{fig:2}(a) we can see the contour lines of the corresponding joint pdf. We observe that we have quadrantal symmetry which is not elliptical, consequently the sources can be separated. In Fig.\,\ref{fig:2}(b) and (c) we plot the elements of the normalized estimates $\bC_t$ as they evolve in time for the two choices of $\bH(Z)$. We recall that $\bC_0=\bA$ where $\bA$ is unknown. We therefore initialized $\bC_0$ with a random matrix corresponding to a random selection of $\bA$ and $\bB_0=\bI$. Blue and magenta lines depict the diagonal elements of $\bC_t$ whereas yellow and orange the anti-diagonal. As we can see, in both algorithms we have convergence towards a non-mixing matrix.

Let us now test the validity of Theorem\,\ref{th:2}. We are going to generate dependent sources with their pdf controlled by a parameter $d$. When $d\neq0$ the joint pdf will have quadrantal symmetry but not elliptical. For $d=0$ the quadrantal symmetry will also become elliptical. This means that in the former case we expect source separability while in the latter the sources will be non-separable. The source model we propose is the following: We start with $r,\theta$ independent random variables with $r$ uniformly distributed in [0,1] and $\theta$ uniformly distributed in $[-\pi,\pi]$. We apply the following transformations to produce the two signals
\begin{equation}
s_1=r\cos\theta;~~~s_2=r\{\sin\theta+d(\sin\theta)^2\mathrm{sgn}(\sin\theta)\}.
\label{eq:dep_sources}
\end{equation}
As we mentioned, $d=0$ is the only value that generates elliptical (actually cyclic) symmetry. We use the classical $\bH(Z)$ matrix in order to demonstrate that the classical algorithms can also separate dependent sources. Fig.\,\ref{fig:3}(a) depicts the contour lines of the pdf and (b) the evolution of the elements of the normalized estimates for the case $d=1$. 
\begin{figure}[t]
\vskip0.3cm
\centerline{\hfill\includegraphics[width=0.4\hsize]{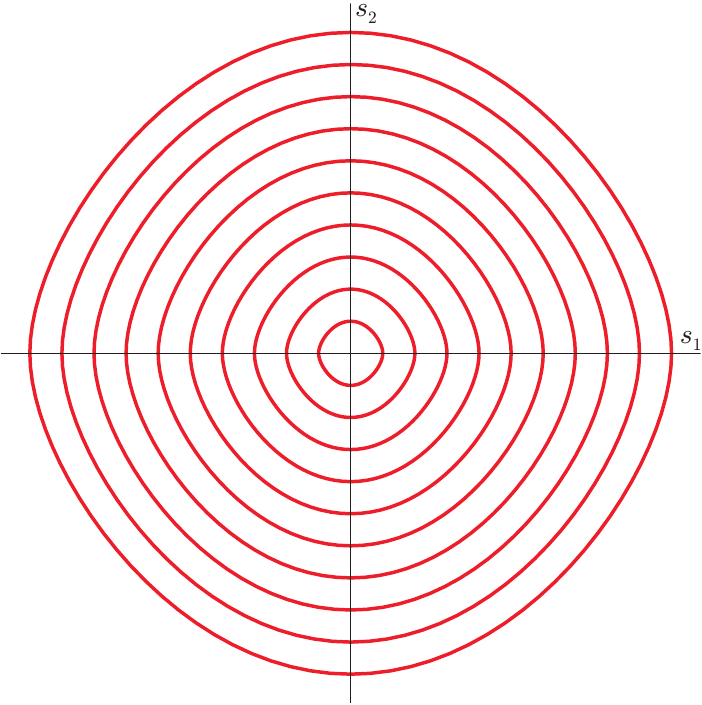}\hfill\includegraphics[width=0.5\hsize]{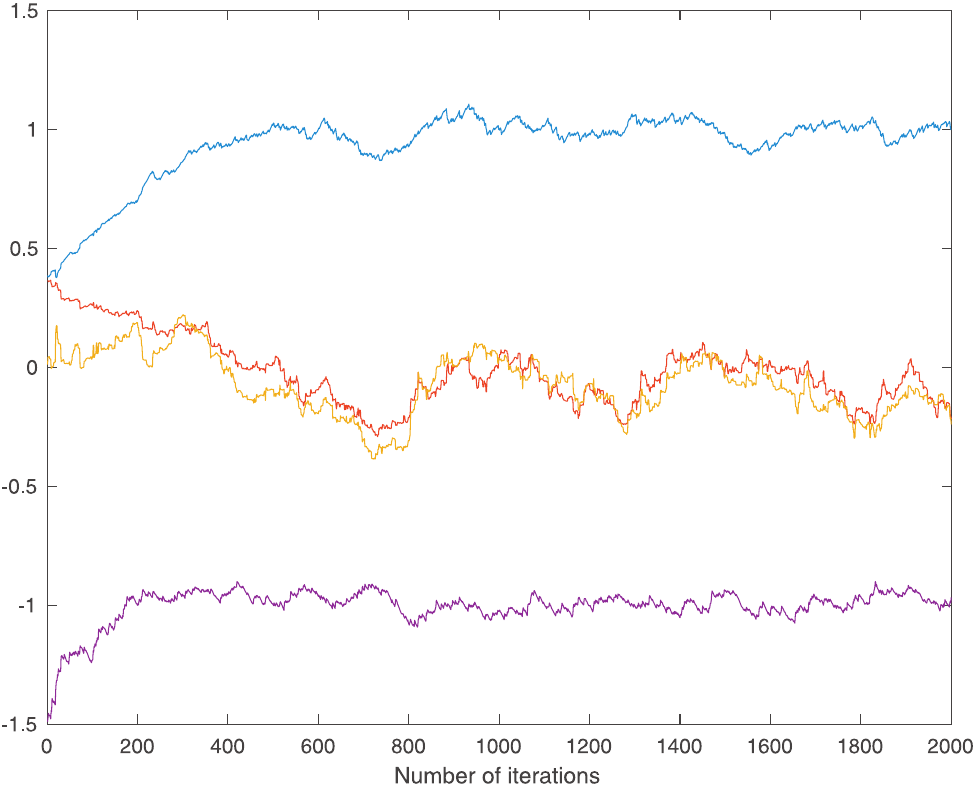}\hfill}
\vskip-0.1cm
\centerline{\hfill\hbox to 0.4\hsize{\hfil\footnotesize(a)\hfil}\hfill\hbox to 0.5\hsize{\hfil\footnotesize(b)\hfil}\hfill}
\vskip-0.2cm
\caption{(a) Contour lines of the joint pdf for sources following \eqref{eq:dep_sources} with $d=1$ and (b)~Evolution of the corresponding normalized estimates.}
\label{fig:3}
\vskip-0.5cm
\end{figure}
As we can see the adaptive algorithm converges to a non-mixing matrix.

In Fig.\,\ref{fig:4}, we present the simulation for $d=0$. Fig.\,\ref{fig:4}(a) has the contours which have indeed cyclic symmetry.
\begin{figure}[h]
\vskip0.2cm
\centerline{\hfill\includegraphics[width=0.4\hsize]{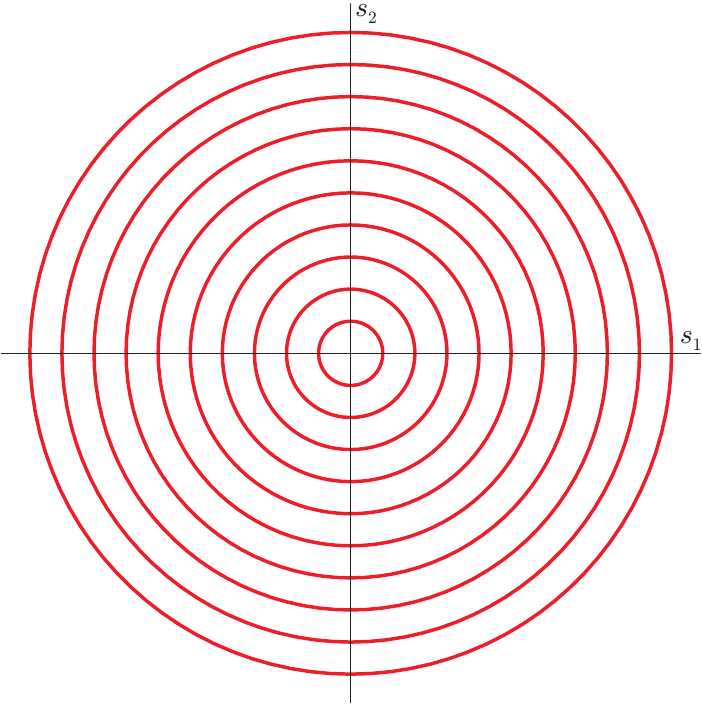}\hfill\includegraphics[width=0.5\hsize]{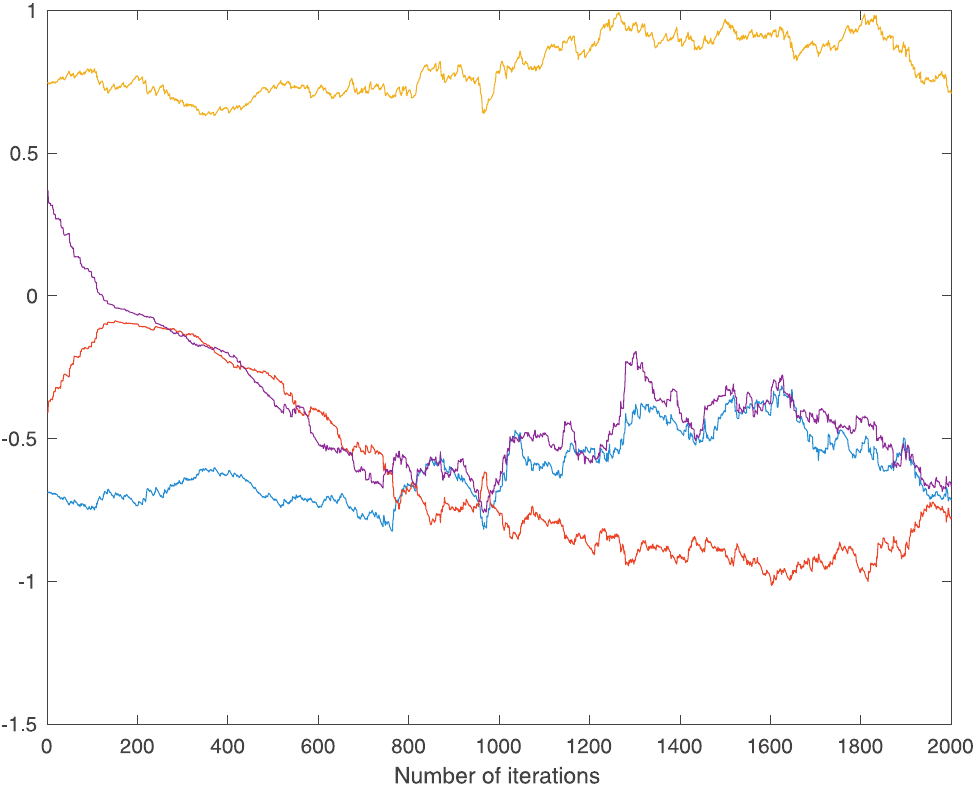}\hfill}
\vskip-0.1cm
\centerline{\hfill\hbox to 0.4\hsize{\hfil\footnotesize(a)\hfil}\hfill\hbox to 0.5\hsize{\hfil\footnotesize(b)\hfil}\hfill}
\vskip-0.2cm
\caption{(a) Contour lines of the joint pdf for sources following \eqref{eq:dep_sources} with $d=0$ and (b)~Evolution of the corresponding normalized estimates.}
\label{fig:4}
\end{figure}
In (b), we have the evolution of $\bC_t$ which, as predicted by our analysis, does not converge to a non-mixing matrix.






\section{Conclusion}
We considered adaptive algorithms that are capable of blindly separating dependent sources. We showed that if the sources exhibit a quadrantal symmetry in their statistical behavior, then simple adaptive algorithms can be employed to separate them. This result indicates that source separability is not a property due to ``independence'' but rather due to ``symmetric statistical behavior''. With our analysis we were also able to identify the dependent sources that are not separable thus extending the Gaussian case known for independent sources.

\section*{Appendix}

\subsection*{Proof of Lemma\,\ref{lem:0}}
The proof is somewhat involved but presents no particular analytical challenges. Since $\bC$ is an equilibrium point satisfying $\Exp[\bH(\bC S)]=0$, where expectation is with respect to $\f(s_1,s_2)$, it is not very difficult to verify that the study of local stability of the mean field \eqref{eq:meanfield} at $\bC$ is the same as studying the local stability of
$$
\bar{\bC}_t=\bar{\bC}_{t-1}-\mu\Exp[\bH(\bar{\bC}_{t-1}S)]\bC
$$
at the same equilibrium. Consider now the perturbation $\bar{\bC}_t=\bC+\bDe_t$. Next we will present the complete computation of the recursion for the element $\alpha_t$ defined in \eqref{eq:Delta} for the perturbation matrix. Similar steps can be applied for the other three terms to show the validity of the lemma. Without loss of generality assume that $\bC=\mathrm{diag}\{c_1,c_2\}$, then for $\alpha_t$ we have
\vskip-0.8cm
\begin{multline*}
\alpha_t=\alpha_{t-1}-\\
~~~~~\mu c_1\Exp[\h_{11}(c_1s_1+\alpha_{t-1}s_1+\gamma_{t-1}s_2,c_2s_2+\delta_{t-1}s_1+\beta_{t-1}s_2)].
\end{multline*}
Applying first order Taylor expansion we obtain
\vskip-0.6cm
\begin{multline*}
\alpha_t=\alpha_{t-1}-\\
\mu c_1\big\{\Exp[\partial_{z_1}\h_{11}(c_1s_1,c_2s_2)s_1]\alpha_{t-1}+\\
\Exp[\partial_{z_1}\h_{11}(c_1s_1,c_2s_2)s_2]\gamma_{t-1}\\
+\Exp[\partial_{z_2}\h_{11}(c_1s_1,c_2s_2)s_1]\delta_{t-1}\\
+\Exp[\partial_{z_2}\h_{11}(c_1s_1,c_2s_2)s_2]\beta_{t-1}\big\},
\end{multline*}
where $\partial_{z_i}$ denotes partial derivative with respect to $z_i$. Since $\h_{11}(z_1,z_2)$ is even symmetric in $z_1$ this implies that $\partial_{z_1}\h_{11}(z_1,z_2)$ is odd in $z_1$ consequently $\partial_{z_1}\h_{11}(c_1s_1,c_2s_2)s_2$ is odd in both arguments $s_1,s_2$. Because $\f(s_1,s_2)$ exhibits quadrantal symmetry this suggests that $\Exp[\partial_{z_1}\h_{11}(c_1s_1,c_2s_2)s_2]=0$. Similar conclusion can be drawn for $\Exp[\partial_{z_2}\h_{11}(c_1s_1,c_2s_2)s_1]$. Because of this observation we can write
\vskip-0.6cm
\begin{multline*}
\alpha_t=\alpha_{t-1}-\\
\mu c_1\big\{\Exp[\partial_{z_1}\h_{11}(c_1s_1,c_2s_2)s_1]\alpha_{t-1}+\\
\Exp[\partial_{z_2}\h_{11}(c_1s_1,c_2s_2)s_2]\beta_{t-1}\big\}.
\end{multline*}
Let us now find a more convenient expression for the two expectations. First note that $\partial_{z_1}\h_{11}(c_1s_1,c_2s_2)=c_1^{-1}\partial_{s_1}\h_{11}(c_1s_1,c_2s_2)$. Using this equality we can write
\begin{multline*}
\Exp[\partial_{z_1}\h_{11}(c_1s_1,c_2s_2)s_1]=\allowdisplaybreaks\\ \allowdisplaybreaks\textstyle
c_1^{-1}\iint \partial_{s_1}\h_{11}(c_1s_1,c_2s_2)s_1\f(s_1,s_2)ds_1ds_2=\allowdisplaybreaks\\ \allowdisplaybreaks\textstyle
-c_1^{-1}\iint \h_{11}(c_1s_1,c_2s_2)\,\partial_{s_1}\!\big(s_1\f(s_1,s_2)\big)ds_1ds_2=\allowdisplaybreaks\\ \allowdisplaybreaks
\textstyle-c_1^{-1}\Exp\left[\h_{11}(c_1s_1,c_2s_2)\left(1+\frac{s_1\f_{s_1}(s_1,s_2)}{\f(s_1,s_2)}\right)\right]=\allowdisplaybreaks\\ \allowdisplaybreaks
\textstyle-c_1^{-1}\Exp\left[\h_{11}(c_1s_1,c_2s_2)\frac{s_1\f_{s_1}(s_1,s_2)}{\f(s_1,s_2)}\right].
\end{multline*}
For the second equality we used integration by parts and \eqref{eq:ouff} and for the last we used \eqref{eq:sys1}. Similar computations can be performed for the second expectation in the recursion for $\alpha_t$ and for the corresponding terms in the recursions for $\beta_t,\gamma_t,\delta_t$. This can prove validity of the formulas for the two matrices $\bF,\bG$ in \eqref{eq:mbofla101} and \eqref{eq:mbofla102}. This completes the proof of the lemma.\qed

\subsection*{Proof of Theorem\,\ref{th:2}}
As we mentioned, for non-separation we need at least one of the two matrices $\bF,\bG$ to have an eigenvalue which is either positive of zero. This property must be true for \textit{all} functions $\h_{ij}(z_1,z_2)$ with symmetries as in \eqref{eq:mbafla1}. Note that a possible selection of $\h_{ij}(z_1,z_2)$ is the following
\begin{gather*}
\textstyle\h_{11}(z_1,z_2)\!=\!-z_1\frac{\f_{s_1}(z_1,z_2)}{\f(z_1,z_2)},\,
\h_{22}(z_1,z_2)\!=\!-z_2\frac{\f_{s_2}(z_1,z_2)}{\f(z_1,z_2)}\\
\textstyle\h_{12}(z_1,z_2)\!=\!-\frac{z_2\f_{s_1}(z_1,z_2)}{\f(z_1,z_2)},\,
\h_{21}(z_1,z_2)\!=\!-\frac{z_1\f_{s_2}(z_1,z_2)}{\f(z_1,z_2)},
\end{gather*}
which satisfies the system of equations \eqref{eq:equat1} with $c_1=c_2=1$. Denote the corresponding $\bF,\bG$ matrices as $\bF_*,\bG_*$ then, using \eqref{eq:mbofla101}, \eqref{eq:mbofla102} we obtain
\begin{align*}
\bF_*&=\textstyle-\Exp\left[\begin{bmatrix}\frac{s_1\f_{s_1}(s_1,s_2)}{\f(s_1,s_2)}\\ \frac{s_2\f_{s_2}(s_1,s_2)}{\f(s_1,s_2)}\end{bmatrix} 
\left[\frac{s_1\f_{s_1}(s_1,s_2)}{\f(s_1,s_2)}~\frac{s_2\f_{s_2}(s_1,s_2)}{\f(s_1,s_2)}\right]\right]\\
\bG_*&=
\textstyle-\Exp\left[\begin{bmatrix}\frac{s_2\f_{s_1}(s_1,s_2)}{\f(s_1,s_2)}\\ \frac{s_1\f_{s_2}(s_1,s_2)}{\f(s_1,s_2)}\end{bmatrix} 
\left[\frac{s_2\f_{s_1}(s_1,s_2)}{\f(s_1,s_2)}~\frac{s_1\f_{s_2}(s_1,s_2)}{\f(s_1,s_2)}\right]\right].
\end{align*}
Both matrices are \textit{symmetric and nonnegative definite}, therefore the only hope to experience instability is \textit{if and only if} at least one of the two matrices has an eigenvalue equal to 0 (since nonzero eigenvalues are necessarily negative). The latter can happen only when we can find constants $K_1,K_2$ such that $[K_1\,-\!\!K_2]^\intercal$ is an eigenvector to a 0 eigenvalue for $\bF_*$ or $\bG_*$. Because both matrices are symmetric and nonnegative definite, this is possible if and only if, at least one of the following two equations is satisfied for \textit{all} $(s_1,s_2)$
\begin{align}
&\textstyle
K_1s_1\f_{s_1}(s_1,s_2)-K_2s_2\f_{s_2}(s_1,s_2)=0
\label{eq:App-1}\\
&\textstyle
K_1s_2\f_{s_1}(s_1,s_2)-K_2s_1\f_{s_2}(s_1,s_2)=0,
\label{eq:App-2}
\end{align}
with not necessarily the same constants $K_1,K_2$.
Summarizing: For the specific selection of the $\h_{ij}(z_1,z_2)$ functions, at least one of the two equations \eqref{eq:App-1}, \eqref{eq:App-2} is required to be true if the sources are non-separable. 

It is a fact that: \textit{If
\eqref{eq:App-1} or \eqref{eq:App-2} is valid then for any other selection of $\h_{ij}(z_1,z_2)$ at least one of the corresponding $\bF$ or $\bG$ matrices will also have an eigenvalue equal to 0.}

This can be seen from \eqref{eq:mbofla101}, \eqref{eq:mbofla102} where we have the expression for $\bF,\bG$ for arbitrary $\h_{ij}(z_1,z_2)$. If for example \eqref{eq:App-2} is true then $\bG$ in \eqref{eq:mbofla102} will have the same $[K_1\,-\!\!K_2]^\intercal$ as a right eigenvector corresponding to a 0 eigenvalue. We can therefore conclude that if at least one of \eqref{eq:App-1}, \eqref{eq:App-2} is true then $\f(s_1,s_2)$ corresponds to non-separable sources.

Let us now examine what type of joint densities $\f(s_1,s_2)$ can satisfy \eqref{eq:App-1}, \eqref{eq:App-2}. 
We start with \eqref{eq:App-1}. Due to the quadrantal symmetry we can limit ourselves to the first quadrant with $s_1,s_2$ nonnegative. Define $z=s_1^{K_2}s_2^{K_1}$ then we can express $s_1$ in terms of $z$ and $s_2$ as $s_1=z^{\frac{1}{K_2}}s_2^{-\frac{K_1}{K_2}}$. Call $\omega(z,s_2)=\f(z^{\frac{1}{K_2}}s_2^{-\frac{K_1}{K_2}},s_2)$ and compute its partial derivative with respect to $s_2$, we have
\begin{multline*}
\textstyle
\omega_{s_2}(z,s_2)=-\frac{K_1}{K_2}z^{\frac{1}{K_2}}s_2^{-\frac{K_1}{K_2}-1}\f_{s_1}(z^{\frac{1}{K_2}}s_2^{-\frac{K_1}{K_2}},s_2)+\allowdisplaybreaks\\
\textstyle\f_{s_2}(z^{\frac{1}{K_2}}s_2^{-\frac{K_1}{K_2}},s_2)\!=\!-\frac{1}{K_2s_2}\Big\{ K_1z^{\frac{1}{K_2}}s_2^{-\frac{K_1}{K_2}}\f_{s_1}(z^{\frac{1}{K_2}}s_2^{-\frac{K_1}{K_2}},s_2)\allowdisplaybreaks\\
-K_2s_2\f_{s_2}(z^{\frac{1}{K_2}}s_2^{-\frac{K_1}{K_2}},s_2)\Big\}=0\allowdisplaybreaks
\end{multline*}
with the last equality coming from \eqref{eq:App-1}. From $\omega_{s_2}(z,s_2)=0$ we conclude that $\omega(z,s_2)=\omega(z)$. Recalling the relationship between $\omega(z,s_2)$ and $\f(s_1,s_2)$ and replacing $z$ with its definition we prove that $\f(s_1,s_2)=\omega(|s_1|^{K_2}|s_2|^{K_1})$. It turns out that functions of this form cannot be legitimate joint pdfs. This can be seen by integrating the equality over $s_2$ in order to identify the marginal pdf $\f_1(s_1)$. We note that
\begin{multline*}
\textstyle\allowdisplaybreaks
\f_1(s_1)=\int\f(s_1,s_2)ds_2=\int\omega(|s_1|^{K_2}|s_2|^{K_1})ds_2\allowdisplaybreaks\\\allowdisplaybreaks
\textstyle
=2\int_0^\infty\omega(|s_1|^{K_2}s_2^{K_1})ds_2\\=|s_1|^{-\frac{K_2}{K_1}}\frac{2}{K_1}\int_0^\infty z^{\frac{1}{K_1}-1}\omega(z)dz=|s_1|^{-\frac{K_2}{K_1}}A,
\end{multline*}
where constant $A$ is defined as $A=\frac{2}{K_1}\int_0^\infty z^{\frac{1}{K_1}-1}\omega(z)dz$.
The resulting form of $\f_1(s_1)$ is \textit{not} an integrable function over the whole real line for any value of the ratio $\frac{K_2}{K_1}$ and therefore cannot play the role of the marginal $\f_1(s_1)$. Consequently \eqref{eq:App-1} cannot be satisfied by any joint pdf $\f(s_1,s_2)$.

Let us now analyze in the same way \eqref{eq:App-2}. If in this case we define
$z=K_2s_1^2+K_1s_2^2$, solve for $s_1$ and follow exactly the same steps as in the previous case, we end up with $\f(s_1,s_2)=\omega(K_2s_1^2+K_1s_2^2)$. What is left to show is that $K_1,K_2$ must be of the same sign which, without loss of generality, can be considered positive. Note that if $K_2>0$ and $K_1<0$ then $K_2s_1^2+K_1s_2^2=r^2$, for fixed $r$, corresponds to a hyperbola. We can then express $s_1,s_2$ in terms of two alternative variables $r$ and $\theta$ as follows
$$\textstyle
s_1=\frac{1}{\sqrt{|K_2|}}r\cosh(\theta),~~s_2=\frac{1}{\sqrt{|K_1|}}r\sinh(\theta)
$$
where $r\geq0$ and $\theta$ can be any real. If the joint pdf of $s_1,s_2$ satisfies $\f(s_1,s_2)=\omega(K_2s_1^2+K_1s_2^2)$, then we can find the corresponding pdf of $r$ and $\theta$ by applying standard methodology for transformations of random variables, this yields
$$
\f(r,\theta)=\omega(r^2)r.
$$
The previous equation suggests that $r$ and $\theta$ are \textit{independent} and $r$ has a pdf of the form $A\omega(r^2)r$, where $A$ suitable constant, while $\theta$ a pdf equal to $A^{-1}$, namely, a \textit{uniform} density. The latter, however, is not possible since $\theta$ takes values on the whole real line and there is no uniform distribution that can support this unbounded range. Regarding this last point, one might argue that the density we are seeking exists and is known as ``degenerate uniform distribution''. However, we recall that this function is not an actual density but rather a \textit{limit} of a regular uniform density whose support increases without limit. The actual limiting function is \textit{not} a pdf since it is 0 everywhere on the real line. Consequently we cannot have $K_2>0$ and $K_1<0$ and the only legitimate choice for the joint pdf of a non-separable pair is a function with elliptical quadrantal symmetry. This completes the proof of the theorem.
\qed


\end{document}